\let\E=\undefined
\newcommand{\A}{\mathcal{A}}  
\newcommand{\F}{\mathcal{F}}  
\newcommand{\N}{\mathcal{N}}  
\DeclareMathOperator{\E}{E}   
\DeclareMathOperator{\opt}{opt}
\newcommand{\ca}{\mathsf{a}}
\newcommand{\cb}{\mathsf{b}}
\newcommand{\cab}{\{\ca,\cb\}}
\newcommand{\compl}[1]{-#1}
\newtheorem{theorem}{Theorem}
\newtheorem{lemma}[theorem]{Lemma}
\newtheorem{fact}[theorem]{Fact}
\theoremstyle{definition}
\theoremstyle{remark}
\newtheorem{remark}{Remark}
\begin{document}

\bigskip
\bigskip
\bigskip
\begin{raggedright}
{\LARGE \textbf{Large Cuts with Local Algorithms on \\ Triangle-Free Graphs}\par}
\bigskip
\bigskip
\bigskip

\textbf{Juho Hirvonen}

\smallskip
{\small Helsinki Institute for Information Technology HIIT,\\Department of Information and Computer Science, Aalto University, Finland\\
\nolinkurl{juho.hirvonen@aalto.fi}\par}
\bigskip

\textbf{Joel Rybicki}

\smallskip
{\small Helsinki Institute for Information Technology HIIT,\\Department of Information and Computer Science, Aalto University, Finland\\
\nolinkurl{joel.rybicki@aalto.fi}\par}
\bigskip

\textbf{Stefan Schmid}

\smallskip
{\small TU Berlin \& T-Labs, Germany\\
\nolinkurl{stefan@net.t-labs.tu-berlin.de}\par}
\bigskip

\textbf{Jukka Suomela}

\smallskip
{\small Helsinki Institute for Information Technology HIIT,\\Department of Information and Computer Science, Aalto University, Finland\\
\nolinkurl{jukka.suomela@aalto.fi}\par}
\end{raggedright}

\bigskip
\bigskip
\bigskip
\noindent\textbf{Abstract.}
We study the problem of finding \emph{large cuts in $d$-regular triangle-free graphs}. In prior work, Shearer (1992) gives a randomised algorithm that finds a cut of expected size $(1/2 + 0.177/\sqrt{d})m$, where $m$ is the number of edges. We give a simpler algorithm that does much better: it finds a cut of expected size $(1/2 + 0.28125/\sqrt{d})m$. As a corollary, this shows that in any $d$-regular triangle-free graph there exists a cut of at least this size.

Our algorithm can be interpreted as a very efficient \emph{randomised distributed algorithm}: each node needs to produce only one random bit, and the algorithm runs in one synchronous communication round. This work is also a case study of applying \emph{computational techniques} in the design of distributed algorithms: our algorithm was designed by a computer program that searched for optimal algorithms for small values of~$d$.

\thispagestyle{empty}
\setcounter{page}{0}
\newpage

\section{Introduction}

We study the problem of finding \emph{large cuts} in \emph{triangle-free graphs}. In particular, we are interested in the design of \emph{fast and simple randomised distributed algorithms}.

\subsection{Random Cuts}\label{ssec:intro-cut}

Let $G = (V,E)$ be a simple undirected graph. A \emph{cut} is a function $c\colon V \to \cab$ that labels the nodes with symbols $\ca$ and $\cb$. An edge $\{u,v\} \in E$ is a \emph{cut edge} if $c(u) \ne c(v)$. We use the convention that the \emph{weight} $w(c)$ of a cut $c$ is the fraction of edges that are cut edges; that is, the weight of the cut is normalised so that it is in the range $[0,1]$. See Figure~\ref{fig:cut} for an illustration.

\begin{figure}[h]
    \centering
    \includegraphics[page=1]{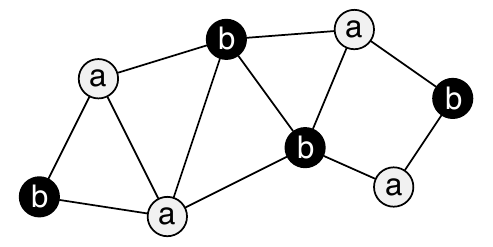}
    \caption{A cut $c\colon V \to \cab$ of weight $w(c) = \frac{10}{12}$.}\label{fig:cut}
\end{figure}

While the problem of finding a maximum cut (or a good approximation of one) is NP-hard~\cite{garey79computers,papadimitriou91optimization,hastad01inapproximability,trevisan00gadgets,khot07maxcut}, there is a very simple randomised algorithm that finds a relatively large cut: for each node $v$, pick $c(v) \in \cab$ independently and uniformly at random. We say that $c$ is a \emph{uniform random cut}.

In a uniform random cut, each edge is a cut edge with probability $1/2$. It follows that the expected weight of a uniform random cut is also $1/2$.

\subsection{Regular Triangle-Free Graphs}

In general graphs, we cannot expect to find cuts that are much better than uniform random cuts. For example, in a complete graph on $n$ nodes, the weight of any cut is at most $1/2 + O(1/n)$.

However, there is a family of graphs that makes for a much more interesting case from the perspective of the max-cut problem: regular triangle-free graphs. Erd\H{o}s~\cite{erdos79problemsresults} raised the problem of estimating the minimum possible size of a maximum cut in a high-girth graph, and especially the case of triangle-free graphs attracted much interest from the research community~\cite{shearer92bipartite,poljak95maximum,alon96bipartite}.

Accordingly, from now on, we assume that $G$ is a $d$-regular graph for some constant $d \ge 2$, and that there are no triangles (cycles of length three) in~$G$. While focusing on regular triangle-free graphs may seem overly restrictive, our algorithm can be applied in a much more general setting; we will briefly discuss extensions in Section~\ref{sec:concl}.

\pagebreak

\subsection{Shearer's Algorithm}\label{ssec:intro-shearer}

In triangle-free graphs, it is easy to find cuts that are (in expectation) larger than uniform random cuts. Nevertheless, a uniform random cut is a good starting point.

Shearer's \cite{shearer92bipartite} algorithm proceeds as follows. Pick three uniform random cuts $c_1$, $c_2$, and $c_3$. For each node $v$, let
\[
    \ell(v) = \bigl|\{v,u\} \in E : c_1(v) = c_1(u) \}\bigr|
\]
be the number of like-minded neighbours in $c_1$. Then the output of a node $v$ is
\begin{equation}
    c(v) = \begin{cases}
        c_1(v), & \text{if } \ell(v) < d/2, \\
        c_1(v), & \text{if } \ell(v) = d/2 \text{ and } c_3(v) = 0, \\
        c_2(v), & \text{if } \ell(v) = d/2 \text{ and } c_3(v) = 1, \\
        c_2(v), & \text{if } \ell(v) > d/2.
    \end{cases}
    \label{eq:salg}
\end{equation}
Put otherwise, a node follows $c_1$ if it seems that there are many cut edges w.r.t.\ $c_1$ in its immediate neighbourhood, and it falls back to another cut $c_2$ otherwise. The value $c_3(v)$ is just used as a random tie-breaker.

Shearer \cite{shearer92bipartite} shows that the expected weight of cut \eqref{eq:salg} is at least
\begin{equation}
    \frac{1}{2} + \frac{\sqrt{2}}{8 \sqrt{d}} \,\approx\, \frac{1}{2} + \frac{0.177}{\sqrt{d}}
    \label{eq:sperf}
\end{equation}
in $d$-regular triangle-free graphs.

\subsection{Our Algorithm}\label{ssec:intro-alg}

Shearer's algorithm can be characterised as follows: take a uniform random cut $c_1$ and then improve it with the help of a \emph{randomised} rule described in \eqref{eq:salg}. In this work, we show that we can do much better with the help of a simple \emph{deterministic} rule.

In our algorithm we pick one uniform random cut $c_1$. Again, each node $v$ counts the number of like-minded neighbours
\[
    \ell(v) = \bigl|\{v,u\} \in E : c_1(v) \ne c_1(u) \}\bigr|.
\]
We define the threshold
\begin{equation}
    \tau = \biggl\lceil \frac{d + \sqrt{d}}{2} \biggr\rceil.
    \label{eq:tau}
\end{equation}
Now the output of a node $v$ is simply
\begin{equation}
    c(v) = \begin{cases}
        c_1(v), &\text{ if } \ell(v) < \tau, \\
        \compl{c_1(v)}, &\text{ if } \ell(v) \ge \tau.
    \end{cases}
    \label{eq:alg}
\end{equation}
Here $\compl{c_1(v)}$ is the complement of $c_1(v)$, that is, $\compl{\ca} = \cb$ and $\compl{\cb} = \ca$. In the algorithm each node simply changes its mind if it seems that there are too many like-minded neighbours.

It is not obvious that such a rule makes sense, or that this particular choice of $\tau$ is good. Nevertheless, we show in this work that the expected weight of cut \eqref{eq:alg} is at least
\begin{equation}
    \frac{1}{2} + \frac{9}{32 \sqrt{d}} \,=\, \frac{1}{2} + \frac{0.28125}{\sqrt{d}},
    \label{eq:perf}
\end{equation}
which is much larger than Shearer's bound \eqref{eq:sperf}, at least in low-degree graphs. As a corollary, any $d$-regular triangle-free graph admits a cut of at least this size.

Our algorithm can be implemented very efficiently in a \emph{distributed} setting: each node only needs to produce one random bit, and the algorithm only requires one communication round. In Shearer's algorithm each node has to produce up to three random bits.

Perhaps the most interesting feature of the algorithm is that it was not designed by a human being---it was discovered by a computer program. Indeed, cuts in triangle-free graphs serve as an example of a computational problem in which \emph{computer-aided methods} can be used to partially \emph{automate algorithm design and analysis} (this process is also known as ``algorithm synthesis'' or ``protocol synthesis''). There is a wide range of other graph problems in which a similar approach has a lot of potential as a shortcut to the discovery of new distributed algorithms.

In Section~\ref{sec:design}, we outline the procedure that we used to design the algorithm, and then present an analysis of its performance. In Section~\ref{sec:concl} we discuss how to apply the algorithm in a more general setting beyond regular triangle-free graphs.

\section{Algorithm Design and Analysis}\label{sec:design}

We begin this section with an informal overview of so-called neighbourhood graphs. The formal definitions that we use in this work are given after that.

\subsection{Neighbourhood Graphs in Prior Work}\label{ssec:neighg-prior}

In the context of distributed systems, the \emph{radius-$t$ neighbourhood} $N(t,v)$ of a node $v$ refers to all information that node $v$ may gather in $t$ communication rounds. Depending on the model of computation that we use, this may include all nodes that are within distance $t$ from $v$, the edges incident to these nodes, their local inputs, and the random bits that these nodes have generated. The idea is that whatever decision node $v$ takes, it can only depend on its radius-$t$ neighbourhood---any distributed algorithm $\A$ that runs in $t$ communication rounds can be interpreted as a mapping from local neighbourhoods to local outputs.

A \emph{neighbourhood graph} $\N_t$ is a graph representation of all possible radius-$t$ neighbourhoods that a distributed algorithm may encounter. Each node $N \in V(\N_t)$ of the neighbourhood graph corresponds to a possible local neighbourhood: there is at least one communication network in which some node has a local neighbourhood isomorphic to $N$. We have an edge $\{N_1,N_2\} \in E(\N_t)$ in the neighbourhood graph if there is some communication network in which nodes with local neighbourhoods $N_1$ and $N_2$ are adjacent; see Figure~\ref{fig:ng} for an example.

\begin{figure}
    \centering
    \includegraphics[page=2]{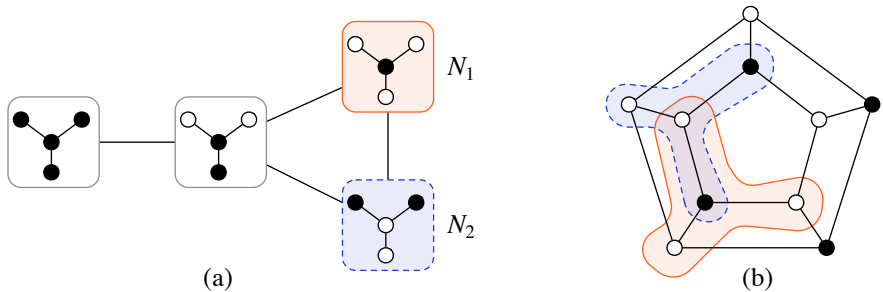}
    \caption{In this example, we study the family $\F$ of 3-regular triangle-free graphs that are labelled with two colours, black and white. (a)~A small part of neighbourhood graph $\N_t$ for $t = 1$. (b)~There exists a graph $G \in \F$ in which local neighbourhoods $N_1$ and $N_2$ are adjacent; hence nodes $N_1$ and $N_2$ are adjacent in the neighbourhood graph.}\label{fig:ng}
\end{figure}

Neighbourhood graphs are a convenient concept in the study of graph colouring algorithms, both from the perspective of traditional algorithm design \cite{linial92locality,naor91lower,kelsen96neighborhood,kuhn06complexity,fraigniaud07distributed} and from the perspective of computational algorithm design \cite{rybicki11msc}. The key observation is that the following two statements are equivalent:
\begin{itemize}[noitemsep]
    \item $\A \colon V(\N_t) \to \{1,2,\dotsc,k\}$ is a proper colouring of the neighbourhood graph $\N_t$,
    \item $\A$ is a distributed algorithm that finds a proper $k$-colouring in $t$ rounds.
\end{itemize}
To see this, consider any graph $G$. If nodes $u$ and $v$ are adjacent in $G$, then their local views $N(t,u)$ and $N(t,v)$ are adjacent in $\N_t$, and by assumption $\A$ assigns a different colour to $N(t,u)$ and $N(t,v)$. Hence distributed algorithm $\A$ finds a proper $k$-colouring of $G$. Conversely, if algorithm $\A$ finds a proper colouring in any communication network, it defines a proper $k$-colouring of $\N_t$.

In summary, colourings of the neighbourhood graph correspond to distributed algorithms for graph colouring, and vice versa. In general, a similar property does \emph{not} hold for arbitrary graph problems. For example, there is no one-to-one correspondence between maximal independent sets of $\N_t$ and distributed algorithms that find maximal independent sets~\cite[Section 8.5]{rybicki11msc}.

However, as we will see in this work, we can use neighbourhood graphs also in the context of the maximum cut problem. It turns out that we can define a \emph{weighted} version of neighbourhood graphs, so that there is a one-to-one correspondence between \emph{heavy} cuts in the weighted neighbourhood graph, and randomised distributed algorithms that find \emph{large} cuts in expectation.

\subsection{Model of Distributed Computing}\label{ssec:model}

Next, we formalise the model of distributed computing that is sufficient for the purposes of our algorithm. Fix the parameter $d$; recall that we are interested in $d$-regular triangle-free graphs. Let $G = (V,E)$ be such a graph, and let $c$ be a uniform random cut in $G$. The \emph{local neighbourhood} of a node $v$ is $N_c(v) = (c(v), \ell_c(v))$, where
\[
    \ell_c(v) = \bigl|\{v,u\} \in E : c(v) = c(u) \}\bigr|
\]
is the number of neighbours with the same random bit. Note that there are only $2d+2$ possible local neighbourhoods.

A distributed algorithm is a function $\A$ that associates an output $\A(N) \in \cab$ with each local neighbourhood $N$. For any $d$-regular triangle-free graph $G = (V,E)$, function $\A$ defines a randomised process that produces a random cut $c'$ as follows:
\begin{enumerate}[noitemsep]
    \item Pick a uniform random cut $c$.
    \item For each node $v$, let $c'(v) = \A(N_c(v))$.
\end{enumerate}
We use the notation $\A(G)$ for the random cut $c'$ produced by algorithm $\A$ in graph $G$. In particular, we are interested in the quantity $\E[w(\A(G))]$, the expected weight of cut~$c'$.

A priori, we might expect that $\E[w(\A(G))]$ would depend on $G$. However, as we will soon see, this is not the case---it only depends on parameter $d$ and algorithm $\A$.

\subsection{Weighted Neighbourhood Graph}

A \emph{weighted digraph} is a pair $D = (V,w)$ with $w \colon V \times V \to [0,\infty)$. Here $V$ is the set of nodes, and $w$ associates a non-negative \emph{weight} $w(x,y) \ge 0$ with each directed edge $(x,y) \in V \times V$. Let $c\colon V \to \cab$ be a cut in weighted digraph $D$. The weight of cut $c$ is
\[
    \label{eq:w}
    w(c) = \sum_{\substack{(u,v) \in V \times V, \\ c(u) \ne c(v)}} w(u,v),
\]
the total weight of all cut edges.

The \emph{weighted neighbourhood graph} $\N = (V_\N,w_\N)$ is a weighted digraph defined as follows (see Figure~\ref{fig:wng} for an illustration). The set of nodes
\[
    V_\N = \bigl\{ (k, i) : k \in \cab,\, i \in \{0,1,\dotsc,d\} \bigr\}
\]
consists of all possible neighbourhoods that we may encounter in $d$-regular triangle-free graphs. We define the edge weights as follows:
\[
    w_\N\bigl((k_1,i_1), (k_2,i_2)\bigr) = \begin{cases}
        \displaystyle
        \frac{1}{4^d}\binom{d-1}{i_1}\binom{d-1}{i_2} & \text{if } k_1 \ne k_2, \\[3.5ex]
        \displaystyle
        \frac{1}{4^d}\binom{d-1}{i_1-1}\binom{d-1}{i_2-1} & \text{if } k_1 = k_2.
    \end{cases}
\]
We follow the convention that $\binom{n}{k} = 0$ for $k < 0$ and $k > n$.

\begin{figure}[t]
    \centering
    \includegraphics[page=3]{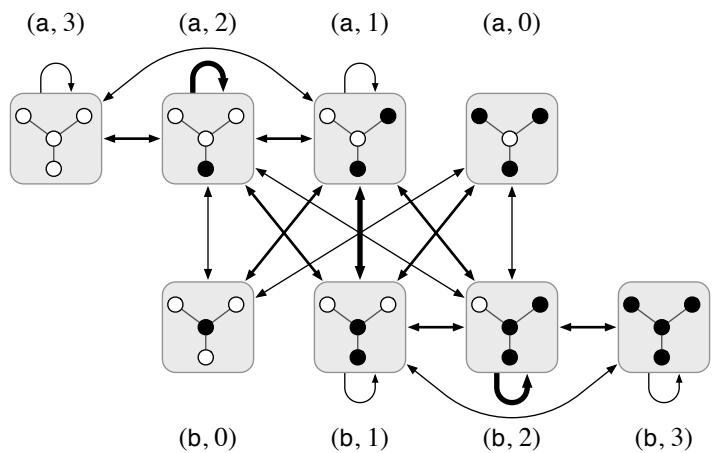}
    \caption{Weighted neighbourhood graph $\N$ for $d = 3$. Edge weights are denoted by line widths; missing edges have weight~$0$.
    Note that the digraph is symmetric; however, we prefer the directed representation so that we do not need special treatment for self-loops.}\label{fig:wng}
\end{figure}

Note that the weights are symmetric, and the total weight of all edges is $1$. The following lemma shows that the weight of the edge $(N_1, N_2)$ in the neighbourhood graph equals the probability of ``observing'' adjacent neighbourhoods of types $N_1$ and $N_2$; see Figure~\ref{fig:wng2}. Note that the probability does not depend on the choice of graph $G$ or edge $\{u,v\}$.

\begin{figure}[t]
    \centering
    \includegraphics[page=4]{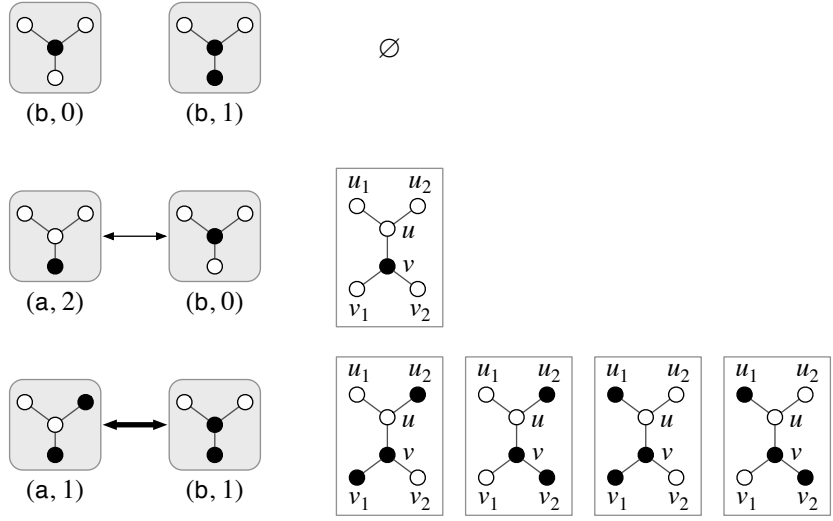}
    \caption{Selected examples of edge weights in the weighted neighbourhood graph $\N$ (see Figure~\ref{fig:wng}). We have $w_\N((\cb,0), (\cb,1)) = 0$, $w_\N((\ca,2), (\cb,0)) = 1/64$, and $w_\N((\ca,1), (\cb,1)) = 1/16$. It is not possible to have a graph in which we have adjacent neighbourhoods of types $(\cb,0)$ and $(\cb,1)$. Adjacent neighbourhoods of types $(\ca,2)$ and $(\cb,0)$ are fairly rare, while adjacent neighbourhoods of types $(\ca,1)$ and $(\cb,1)$ are much more common.}\label{fig:wng2}
\end{figure}

\begin{lemma}\label{lem:ngraph-weight}
    Let $G$ be a $d$-regular triangle-free graph, and let $\{u,v\}$ be an edge of~$G$. Consider a uniform random cut $c$ of $G$. Then for any given neighbourhoods $N_1, N_2 \in V_\N$ we have
    \[
        \Pr\bigl[N_c(u) = N_1 \text{ and } N_c(v) = N_2\bigr]
        \,=\, w_\N(N_1, N_2).
    \]
\end{lemma}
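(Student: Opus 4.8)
The plan is to compute the probability directly from the definitions, using triangle-freeness only to obtain an independence structure. Write $N_1 = (k_1, i_1)$ and $N_2 = (k_2, i_2)$, and let $c(x)$, $x \in V$, be the i.i.d.\ uniform bits defining the cut. The one structural fact I would establish first is this: since $\{u,v\} \in E$ and $G$ is triangle-free, $u$ and $v$ have no common neighbour, so (using also that $G$ is simple and $d$-regular) we can write $N(u) = \{v\} \cup A$ and $N(v) = \{u\} \cup B$ with $A = N(u) \setminus \{v\}$ and $B = N(v) \setminus \{u\}$, where $|A| = |B| = d - 1$ and the four sets $\{u\}$, $\{v\}$, $A$, $B$ are pairwise disjoint. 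Consequently $c(u)$, $c(v)$, $\{c(x) : x \in A\}$ and $\{c(x) : x \in B\}$ are four mutually independent families of uniform bits.

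Next I would rewrite the quantities of interest in terms of these bits: by definition of $\ell_c$,
\[
    \ell_c(u) = \bigl[c(u) = c(v)\bigr] + \bigl|\{x \in A : c(x) = c(u)\}\bigr|, \qquad
    \ell_c(v) = \bigl[c(u) = c(v)\bigr] + \bigl|\{x \in B : c(x) = c(v)\}\bigr|.
\]
The event $\{N_c(u) = N_1 \text{ and } N_c(v) = N_2\}$ therefore forces $c(u) = k_1$ and $c(v) = k_2$, and then constrains the two ``like-minded counts'' inside $A$ and $B$. I would split into the two cases matching the definition of $w_\N$. If $k_1 \ne k_2$, the event requires $c(u) = k_1 \ne k_2 = c(v)$, so the indicator vanishes and we need exactly $i_1$ nodes of $A$ coloured $k_1$ and $i_2$ nodes of $B$ coloured $k_2$; by independence the probability factors as $\tfrac12 \cdot \tfrac12 \cdot \binom{d-1}{i_1} 2^{-(d-1)} \cdot \binom{d-1}{i_2} 2^{-(d-1)} = 4^{-d}\binom{d-1}{i_1}\binom{d-1}{i_2}$. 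If $k_1 = k_2 =: k$, the event requires $c(u) = c(v) = k$, the indicator equals $1$, and we need $i_1 - 1$ nodes of $A$ and $i_2 - 1$ nodes of $B$ coloured $k$; the same factorisation gives $4^{-d}\binom{d-1}{i_1 - 1}\binom{d-1}{i_2 - 1}$. In both cases this is exactly $w_\N(N_1, N_2)$, and the expression manifestly does not depend on $G$ or on the chosen edge, which proves the lemma.

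I do not expect a genuine obstacle here. The only points that need care are (a) spelling out the triangle-free disjointness argument precisely enough that the four groups of bits are really independent --- this is the single place where the triangle-free hypothesis is used; and (b) checking that the degenerate cases are absorbed by the convention $\binom{n}{k} = 0$ for $k < 0$ or $k > n$: when $k_1 = k_2$ the value $i_1 = 0$ is impossible (the edge $\{u,v\}$ already contributes a like-minded neighbour), matched by $\binom{d-1}{-1} = 0$, and $\ell_c \le d$ is matched by $\binom{d-1}{d} = 0$. Hence no separate treatment of boundary values is needed.
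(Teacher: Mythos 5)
Your proposal is correct and follows essentially the same route as the paper's own proof: both exploit triangle-freeness (plus simplicity and $d$-regularity) to get disjointness of $\{u\}$, $\{v\}$, $N(u)\setminus\{v\}$, $N(v)\setminus\{u\}$, hence independence of the corresponding bits, and then factor the probability and split on $k_1 = k_2$ versus $k_1 \ne k_2$. Your explicit rewriting of $\ell_c(u)$ as an indicator plus a binomial count, and your remark that the degenerate cases $i_1 \in \{0, d\}$ when $k_1 = k_2$ are absorbed by the convention $\binom{n}{k}=0$, are slightly more spelled out than the paper's version but not substantively different.
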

\begin{proof}
    In what follows, we will denote the neighbours of $u$ by $u_1, u_2, \dotsc, u_d$ where $u_d = v$. Similarly, the neighbours of $v$ are $v_1, v_2, \dotsc, v_d$ where $v_d = u$. As $G$ is triangle-free, sets $S_u = \{u_1, u_2, \dots, u_{d-1}\}$ and $S_v = \{v_1, v_2, \dots, v_{d-1}\}$ are disjoint. In particular, the random variables $c(x)$ for $x \in S_u \cup S_v$ are independent.

    Let $N_1 = (k_1, i_1)$ and $N_2 = (k_2, i_2)$. There are two cases. First assume that $k_1 = k_2$. Then
    \[
    \begin{split}
        \Pr\bigl[N_c(u) = N_1 \text{ and } N_c(v) = N_2\bigr]
        \ = \ &\Pr\bigl[c(u) = k_1 \text{ and } c(v) = k_2\bigr] \cdot \\
              &\Pr\bigl[|\{y \in S_u : c(y) = k_1\}| = i_1 - 1\bigr] \cdot \\
              &\Pr\bigl[|\{y \in S_v : c(y) = k_2\}| = i_2 - 1\bigr] \\
        \ = \ &\frac{1}{4}
               \cdot \frac{1}{2^{d-1}} \binom{d-1}{i_1 - 1}
               \cdot \frac{1}{2^{d-1}} \binom{d-1}{i_2 - 1} \\
        \ = \ &w_\N(N_1, N_2).
    \end{split}
    \]
    Second, assume that $k_1 \ne k_2$. Then
    \[
    \begin{split}
        \Pr\bigl[N_c(u) = N_1 \text{ and } N_c(v) = N_2\bigr]
        \ = \ &\Pr\bigl[c(u) = k_1 \text{ and } c(v) = k_2\bigr] \cdot \\
              &\Pr\bigl[|\{y \in S_u : c(y) = k_1\}| = i_1\bigr] \cdot \\
              &\Pr\bigl[|\{y \in S_v : c(y) = k_2\}| = i_2\bigr] \\
        \ = \ &\frac{1}{4}
               \cdot \frac{1}{2^{d-1}} \binom{d-1}{i_1}
               \cdot \frac{1}{2^{d-1}} \binom{d-1}{i_2} \\
        \ = \ &w_\N(N_1, N_2). \qedhere
    \end{split}
    \]
\end{proof}

\pagebreak

\subsection{Cuts in Neighbourhood Graphs}

Any function $\A \colon V_\N \to \cab$ can be interpreted in two ways:
\begin{enumerate}
    \item A cut of weight $w_\N(\A)$ in the weighted neighbourhood graph $\N$.
    \item A distributed algorithm that finds a cut in any $d$-regular triangle-free graph: the algorithm picks a uniform random cut $c$, and then node $v$ outputs $\A(N_c(v))$.
\end{enumerate}
The following lemma shows that the two interpretations are closely related: if $\A$ is a cut of weight $w$ in neighbourhood graph $\N$, then it immediately gives us a distributed algorithm that finds a cut of \emph{expected} weight $w$ in any $d$-regular triangle-free graph.

\begin{lemma}\label{lem:cut-is-alg}
    If $\A \colon V_\N \to \cab$ is a cut in neighbourhood graph $\N$, and $G$ is a $d$-regular triangle-free graph, then $\E[w(\A(G))] = w_\N(\A)$.
\end{lemma}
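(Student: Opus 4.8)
The plan is to reduce the statement to a per-edge computation via linearity of expectation, and then invoke Lemma~\ref{lem:ngraph-weight} to identify the per-edge cut probability with the weight $w_\N(\A)$. Write $m = |E|$. Directly from the definition of the weight of a cut in $G$, for any fixed outcome of the uniform random cut $c$,
\[
    w(\A(G)) = \frac{1}{m} \sum_{\{u,v\} \in E} \bigl[\A(N_c(u)) \ne \A(N_c(v))\bigr],
\]
where $[\,\cdot\,]$ denotes the Iverson bracket. Taking expectations and using linearity gives
\[
    \E\bigl[w(\A(G))\bigr] = \frac{1}{m} \sum_{\{u,v\} \in E} \Pr\bigl[\A(N_c(u)) \ne \A(N_c(v))\bigr].
\]

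The next step is to evaluate the per-edge probability. I would fix an edge $\{u,v\}$, orient it arbitrarily as $(u,v)$, and partition the probability space according to the observed pair of local neighbourhoods $(N_c(u), N_c(v))$, which ranges over $V_\N \times V_\N$. This yields
\[
    \Pr\bigl[\A(N_c(u)) \ne \A(N_c(v))\bigr]
    = \sum_{\substack{(N_1,N_2) \in V_\N \times V_\N \\ \A(N_1) \ne \A(N_2)}} \Pr\bigl[N_c(u) = N_1 \text{ and } N_c(v) = N_2\bigr].
\]
By Lemma~\ref{lem:ngraph-weight}, each summand equals $w_\N(N_1,N_2)$, so the right-hand side is exactly $\sum_{(N_1,N_2):\ \A(N_1)\ne\A(N_2)} w_\N(N_1,N_2) = w_\N(\A)$, the weight of the cut $\A$ in $\N$. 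Crucially, this value is the same for every edge and does not depend on the chosen orientation, since both $w_\N$ and the predicate $\A(N_1) \ne \A(N_2)$ are symmetric in $N_1, N_2$. Substituting back, all $m$ terms in the sum are equal to $w_\N(\A)$, the factor $1/m$ cancels, and we obtain $\E[w(\A(G))] = w_\N(\A)$; in particular the answer depends only on $d$ and $\A$, as promised in Section~\ref{ssec:model}.

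I do not expect a genuine obstacle here — the argument is essentially linearity of expectation plus Lemma~\ref{lem:ngraph-weight}. The one point I would check carefully is the bookkeeping between the \emph{unordered} edges of $G$, over which $w(\A(G))$ is a normalised sum, and the \emph{ordered} pairs of nodes in the weighted digraph $\N$, over which $w_\N(\A)$ is summed: one must verify that summing $w_\N$ over all ordered pairs $(N_1,N_2)$ with $\A(N_1)\ne\A(N_2)$ is exactly what matches a single edge's cut probability, so that no spurious factor of $2$ creeps in. (As a sanity check, the constant algorithm gives $w_\N(\A)=0$, and the ``follow $c_1$'' algorithm $\A(k,i)=k$ gives $w_\N(\A)=1/2$, matching the weight of a uniform random cut.)
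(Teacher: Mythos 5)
Your proof is correct and follows essentially the same route as the paper: linearity of expectation reduces the claim to a per-edge cut probability, which is then identified with $w_\N(\A)$ by partitioning over ordered neighbourhood pairs and invoking Lemma~\ref{lem:ngraph-weight}. The paper simply presents the chain of equalities in the reverse order (starting from $w_\N(\A)$ and ending at $\Pr[\A(N_c(u))\ne\A(N_c(v))]$, then summing over edges), and your explicit check that the ordered-pair sum carries no spurious factor of~$2$ is sound.
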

\begin{proof}
    Fix a graph $G$ and an edge $\{u,v\}$ of $G$. By Lemma~\ref{lem:ngraph-weight} we have
    \[
    \begin{split}
        w_\N(\A)
        &= \sum_{\A(N_1) \ne \A(N_2)} w_\N(N_1,N_2) \\
        &= \sum_{\A(N_1) \ne \A(N_2)} \Pr\bigl[N_c(u) = N_1 \text{ and } N_c(v) = N_2\bigr] \\
        &= \Pr\bigl[\A(N_c(u)) \ne \A(N_c(v))\bigr].
    \end{split}
    \]
    The claim follows by summing over all edges $\{u,v\}$ of $G$.
\end{proof}

\subsection{Computational Algorithm Design}\label{ssec:compalg}

Now we have all the tools that we need. Lemma~\ref{lem:cut-is-alg} gives a one-to-one correspondence between large cuts of the neighbourhood graph and distributed algorithms that find large cuts. For any fixed value of $d$, the task of designing a distributed algorithm is now straightforward:
\begin{enumerate}[noitemsep]
    \item Construct the weighted neighbourhood graph $\N$.
    \item Find a heavy cut in $\N$.
\end{enumerate}
See Figure~\ref{fig:wng-cut} for an example. For $d = 3$, the heaviest cut $\A_{\opt}$ of $\N$ is
\begin{equation}
    \label{eq:alg3}
    \A_{\opt}((k,i)) = \begin{cases}
        k & \text{if } i < 3, \\
        \compl{k} & \text{if } i \ge 3.
    \end{cases}
\end{equation}
This is also the best possible algorithm for this value of $d$, for the model of computing that we defined in Section~\ref{ssec:model}.

\begin{figure}
    \centering
    \includegraphics[page=5]{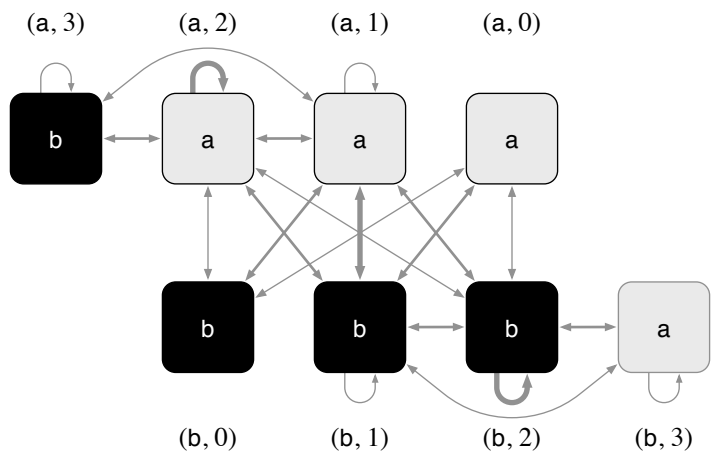}
    \caption{Maximum-weight cut in the weighted neighbourhood graph for $d = 3$.}\label{fig:wng-cut}
\end{figure}

\begin{remark}
    The reader may want to compare \eqref{eq:alg3} with Section~\ref{ssec:intro-alg}. For $d = 3$, the algorithms are identical, albeit with a slightly different notation. Note that $\tau_3 = 3$.
\end{remark}

Of course finding a maximum-weight cut is hard in the general case. However, in this particular case neighbourhood graphs are relatively small (only $2d+2$ nodes).

\pagebreak

While the smallest cases could be easily solved with brute force, slightly more refined approaches are helpful for moderate values of $d$. We took the following approach. First, we reduced the max-weight-cut instance $\N$ to a max-weight-SAT instance $\phi$ in a straightforward manner:
\begin{itemize}
    \item For each node $u \in V_\N$ we have a Boolean variable $x_u$ in formula $\phi$.
    \item For each edge $(u,v)$ of weight $w_\N(u,v)$ we have two clauses in formula $\phi$, both of weight $w_\N(u,v)$:
    \[
        x_u \lor x_v
        \quad \text{and} \quad
        \neg x_u \lor \neg x_v
    \]
    Note that at least one of these clauses is always satisfied, while both of them are satisfied if and only if $x_u$ and $x_v$ have different values.
\end{itemize}
Now it is easy to see that a variable assignment $x$ of $\phi$ that maximises the total weight of satisfied clauses also gives a maximum-weight cut $\A$ in $\N$: let $\A(u) = \ca$ iff $x_u$ is true. More precisely, the total weight of the clauses satisfied by $x$ is $W + w_\N(\A)$, where $W$ is the total weight of all edges.

With this reduction, we can then resort to off-the-self max-weight-SAT solvers. In our experiments we used \emph{akmaxsat} solver \cite{kugel10akmaxsat}; with it we can solve the cases $d = 2, 3, \dotsc, 32$ very quickly (e.g., the case $d = 32$ on a low-end laptop in less than 5 seconds).

Surprisingly, in all cases the max-weight cut has the following simple structure:
\begin{equation}
    \label{eq:alggen}
    \A_\tau((k,i)) = \begin{cases}
        k & \text{if } i < \tau, \\
        \compl{k} & \text{if } i \ge \tau.
    \end{cases}
\end{equation}
The exact values of $\tau$ for the heaviest cuts are given in Table~\ref{tab:smalld}; note that all values are slightly larger than $d/2$.

\begin{table}
    \centering
    {\small\begin{tabular}{@{}rr@{ }r@{ }r@{ }r@{ }r@{ }r@{ }r@{ }r@{ }r@{ }r@{ }r@{ }r@{ }r@{ }r@{ }r@{ }r@{ }r@{ }r@{ }r@{ }r@{ }r@{ }r@{ }r@{ }r@{ }r@{ }r@{ }r@{ }r@{ }r@{ }r@{ }r@{}}
$d$:
& 2 & 3 & 4 & 5 & 6 & 7 & 8 & 9 & 10 & 11 & 12 & 13 & 14 & 15 & 16 & 17 & 18 & 19 & 20 & 21 & 22 & 23 & 24 & 25 & 26 & 27 & 28 & 29 & 30 & 31 & 32 \\
$\tau_{\opt}$:
& 2 & 3 & 3 & 4 & 5 & 5 & 6 & 6 & 7 & 7 & 8 & 9 & 9 & 10 & 10 & 11 & 11 & 12 & 12 & 13 & 14 & 14 & 15 & 15 & 16 & 16 & 17 & 17 & 18 & 18 & 19 \\
\end{tabular}
}
    \caption{Optimal threshold $\tau_{\opt}$ for small values of $d$.}\label{tab:smalld}
\end{table}

\subsection{Generalisation}\label{ssec:gen}

Now it is easy to generalise the findings: we can make the educated guess that algorithms of form \eqref{eq:alggen} are good also in the case of a general~$d$. All we need to do is to find a general expression for the threshold $\tau$, and prove that algorithm $\A_\tau$ indeed works well in the general case.

To facilitate algorithm analysis, let us define the shorthand notation
\[
    \alpha(\tau, d) = w_\N(\A_\tau) = \E[w(\A_\tau(G))]
\]
for the performance of algorithm $\A_\tau$. It is easy to see that
$\alpha(0,d) = \alpha(d+1,d) = 1/2$,
as the threshold value of $\tau = d+1$ simply means that algorithm $\A_\tau$ outputs a uniform random cut, while $\tau = 0$ means that $\A_\tau$ outputs the complement of the uniform random cut. The general shape of $\alpha(\tau,d)$ is illustrated in Figure~\ref{fig:alpha}.

\begin{figure}
    \centering
    \includegraphics{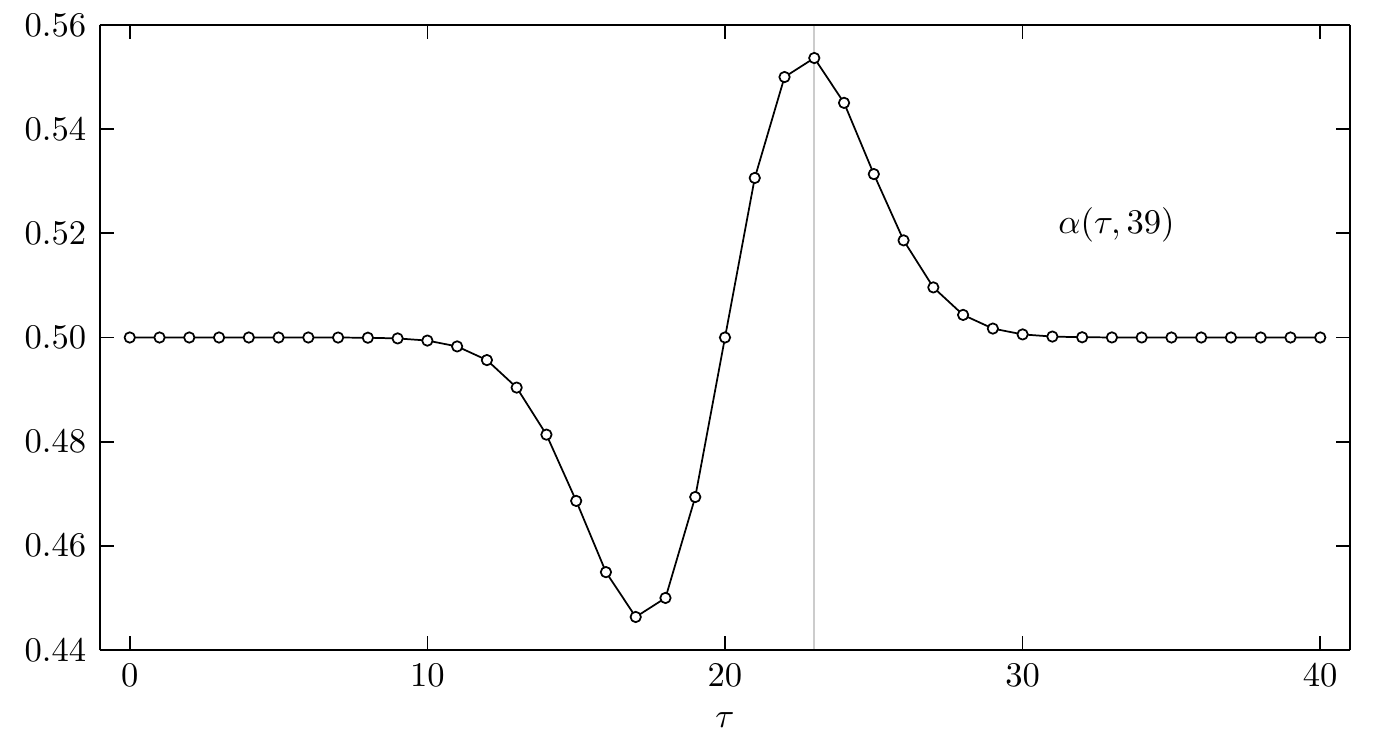}
    \caption{$\alpha(\tau,39)$ for $\tau = 0, 1, \dotsc, 40$.}\label{fig:alpha}
\end{figure}

We are interested in the region $\tau > d/2$, where $\alpha(\tau,d) \ge 1/2$. In the following, we derive a relatively simple expression for $\alpha(\tau,d)$ in this region---the proof strategy is inspired by Shearer~\cite{shearer92bipartite}.

\begin{lemma}\label{lem:wtau}
    For all $d$ and $\tau > d/2$ we have
    \[
        \alpha(\tau, d)
        = \frac{1}{2} + \frac{1}{4^{d-1}}
          \binom{d-1}{\tau-1}
          \sum_{i=d-\tau+1}^{\tau-1} \binom{d-1}{i}.
    \]
\end{lemma}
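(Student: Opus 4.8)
The plan is to invoke Lemma~\ref{lem:cut-is-alg}: fixing any $d$-regular triangle-free graph $G$ and any edge $\{u,v\}$ of $G$, the proof of that lemma already shows $\alpha(\tau,d)=\Pr[\A_\tau(N_c(u))\ne\A_\tau(N_c(v))]$ for a uniform random cut $c$, so it suffices to compute this one probability. I would introduce three random variables: $X$, the number of neighbours of $u$ other than $v$ whose bit equals $c(u)$; the analogous count $Y$ for $v$; and the indicator $Z=[c(u)=c(v)]$. Just as in the proof of Lemma~\ref{lem:ngraph-weight}, triangle-freeness makes the relevant node bits mutually independent, so $X,Y,Z$ are independent, $X,Y\sim\mathrm{Bin}(d-1,1/2)$, and $Z$ is a fair coin. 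Moreover $\ell_c(u)=X+Z$ and $\ell_c(v)=Y+Z$, so $\A_\tau$ ``flips'' $u$ iff $X+Z\ge\tau$ and flips $v$ iff $Y+Z\ge\tau$.

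Next I would rewrite the event that $\{u,v\}$ is a cut edge of $\A_\tau$. Identifying $\ca,\cb$ with $0,1$, the label that $\A_\tau$ assigns to $u$ is $c(u)$ XOR its flip bit, and likewise for $v$; since $c(u)\oplus c(v)=1-Z$, the edge is cut precisely when the two flip bits disagree and $Z=1$, or they agree and $Z=0$. Setting $q_1=\Pr[X\ge\tau-1]$ and $q_0=\Pr[X\ge\tau]$ and conditioning on $Z$ (using independence of $X$ and $Y$) gives
\[
    \alpha(\tau,d)\;=\;\tfrac12\cdot 2q_1(1-q_1)+\tfrac12\bigl(q_0^2+(1-q_0)^2\bigr)\;=\;\tfrac12+(q_1-q_0)(1-q_1-q_0).
\]

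It then remains to evaluate the two factors. The first is immediate: $q_1-q_0=\Pr[X=\tau-1]=\binom{d-1}{\tau-1}/2^{d-1}$. For the second, $1-q_1-q_0=\Pr[X\le\tau-2]-\Pr[X\ge\tau]$, and the symmetry $\binom{d-1}{i}=\binom{d-1}{d-1-i}$ gives $\Pr[X\ge\tau]=\Pr[X\le d-1-\tau]$; since $\tau>d/2$ forces $d-1-\tau\le\tau-2$, the lower tails cancel and
\[
    1-q_1-q_0\;=\;\Pr[d-\tau\le X\le\tau-2]\;=\;\frac{1}{2^{d-1}}\sum_{i=d-\tau}^{\tau-2}\binom{d-1}{i}\;=\;\frac{1}{2^{d-1}}\sum_{i=d-\tau+1}^{\tau-1}\binom{d-1}{i},
\]
where the last equality is another application of the binomial symmetry. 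Multiplying the two factors, the powers of $2^{d-1}$ combine into $4^{d-1}$, yielding exactly the claimed formula.

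The one delicate point is the index bookkeeping in this last step: one must honour the convention $\binom{n}{k}=0$ for $k<0$ or $k>n$, and check that $\tau>d/2$ (equivalently, $\tau\ge(d+1)/2$ since $\tau$ is an integer) indeed forces $d-1-\tau\le\tau-2$, so that the two lower tails cancel exactly and leave a contiguous block of binomial coefficients. The other ingredients — the independence coming from triangle-freeness, the XOR description of cut edges, the conditioning on $Z$, and the algebraic identity $\tfrac12+(q_1-q_0)(1-q_1-q_0)$ — are routine.
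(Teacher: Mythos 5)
Your proof is correct and follows essentially the same route as the paper's: condition on whether $c(u)=c(v)$, reduce the cut-edge probability to binomial tail probabilities, and obtain the product form $\tfrac12+(q_1-q_0)(1-q_1-q_0)$, which is exactly the paper's $\tfrac12+(r_u-p_u)(q_u-r_u)$ under the identification $p_u=q_0$, $q_u=1-q_0$, $r_u=q_1$. The explicit random variables $X,Y,Z$ and the XOR description of the cut-edge event are a clean repackaging of the paper's direct case analysis ($p$, $q$, $r$ and the conditional probabilities $p_u$, $q_u$, $r_u$), but the underlying algebra and binomial-symmetry manipulations are identical.
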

\begin{proof}
    Fix a triangle-free $d$-regular graph $G = (V,E)$. Recall that $c$ is a uniform random cut, $N_c(v) = (c(v), \ell_c(v))$ is the local neighbourhood of node $v \in V$, and $\A(N_c(v))$ is the output of algorithm $\A$ at node $v \in V$.

    Consider an edge $\{u,v\}$ of $G$. We will calculate the probability that $e$ is a cut edge. To this end, define
    \begin{align*}
        p &= \Pr\bigl[c(u) \ne c(v) \text{ and } \ell_c(u), \ell_c(v) \ge \tau\bigr], \\
        q &= \Pr\bigl[c(u) \ne c(v) \text{ and } \ell_c(u), \ell_c(v) < \tau\bigr], \\
        r &= \Pr\bigl[c(u) =   c(v) \text{ and either } \ell_c(u) < \tau \le \ell_c(v) \text{ or } \ell_c(v) < \tau \le \ell_c(u)\bigr].
    \end{align*}
    These are precisely the cases in which $\A(N_c(u)) \ne \A(N_c(v))$; hence $\{u,v\}$ is a cut edge with probability $p+q+r$. For each $x \in \{u,v\}$, let
    \begin{align*}
        p_x &= \Pr\bigl[\ell_c(x) \ge \tau \mid c(u) \ne c(v)\bigr], \\
        q_x &= \Pr\bigl[\ell_c(x) <   \tau \mid c(u) \ne c(v)\bigr], \\
        r_x &= \Pr\bigl[\ell_c(x) \ge \tau \mid c(u) =   c(v)\bigr].
    \end{align*}
    Now we have the following identities:
    \begin{align*}
        p &= \frac{1}{2} p_u p_v, &
        q &= \frac{1}{2} q_u q_v, &
        r &= \frac{1}{2} (r_v (1-r_u) + r_u(1-r_v)).
    \end{align*}
    By definition, $q_x = 1-p_x$, and by symmetry, $p_u = p_v$, $q_u = q_v$, and $r_u = r_v$. Hence the probability that $\{u,v\}$ is a cut edge is
    \begin{equation}\label{eq:wtau-pqr}
    \begin{split}
        p + q + r
        &= \frac{1}{2} p_u^2 + \frac{1}{2} q_u^2 + r_u(1-r_u)
        = \frac{1}{2} + p_u(p_u - 1) + r_u(1 - r_u) \\
        &= \frac{1}{2} - p_u q_u + r_u(p_u + q_u - r_u)
        = \frac{1}{2} + (r_u - p_u) (q_u - r_u).
    \end{split}
    \end{equation}

    An argument similar to what we used in Lemma~\ref{lem:ngraph-weight} gives
    \begin{align*}
        p_u &= \frac{1}{2^{d-1}} \sum_{i=\tau}^{d-1} \binom{d-1}{i}, &
        q_u &= \frac{1}{2^{d-1}} \sum_{i=0}^{\tau-1} \binom{d-1}{i}, &
        r_u &= \frac{1}{2^{d-1}} \sum_{i=\tau-1}^{d-1} \binom{d-1}{i}.
    \end{align*}
    Recall that we assumed that $\tau > d/2$; hence $\tau - 1 \ge d - \tau$ and
    \begin{align*}
        2^{d-1}(r_u - p_u) &= \binom{d-1}{\tau-1}, \\
        2^{d-1}(q_u - r_u) &= \sum_{i=0}^{\tau-1} \binom{d-1}{i}
                            - \sum_{i=0}^{d-\tau} \binom{d-1}{i}
                            = \sum_{i=d-\tau+1}^{\tau-1} \binom{d-1}{i}.
    \end{align*}
    From \eqref{eq:wtau-pqr} we therefore obtain
    \[
        p + q + r
        = \frac{1}{2} + \frac{1}{4^{d-1}} \binom{d-1}{\tau-1} \sum_{i=d-\tau+1}^{\tau-1} \binom{d-1}{i}.
        \qedhere
    \]
\end{proof}

Now we can easily find an optimal threshold $\tau$ for any given $d$: simply try all $d/2$ possible values and apply Lemma~\ref{lem:wtau}. Figure~\ref{fig:tau} is a plot of optimal $\tau$ for $d = 2, 3, \dotsc, 1000$. At least for small values of $d$, it appears that
\[
    \tau \approx \frac{d+1}{2} + 0.439 \sqrt{d}
\]
is close to the optimum. For notational convenience, we pick a slightly larger value
\[
    \tau = \biggl\lceil \frac{d + \sqrt{d}}{2} \biggr\rceil.
\]
Now we have arrived at the algorithm that we already described in Section~\ref{ssec:intro-alg}.

\begin{figure}[p]
    \centering
    \includegraphics{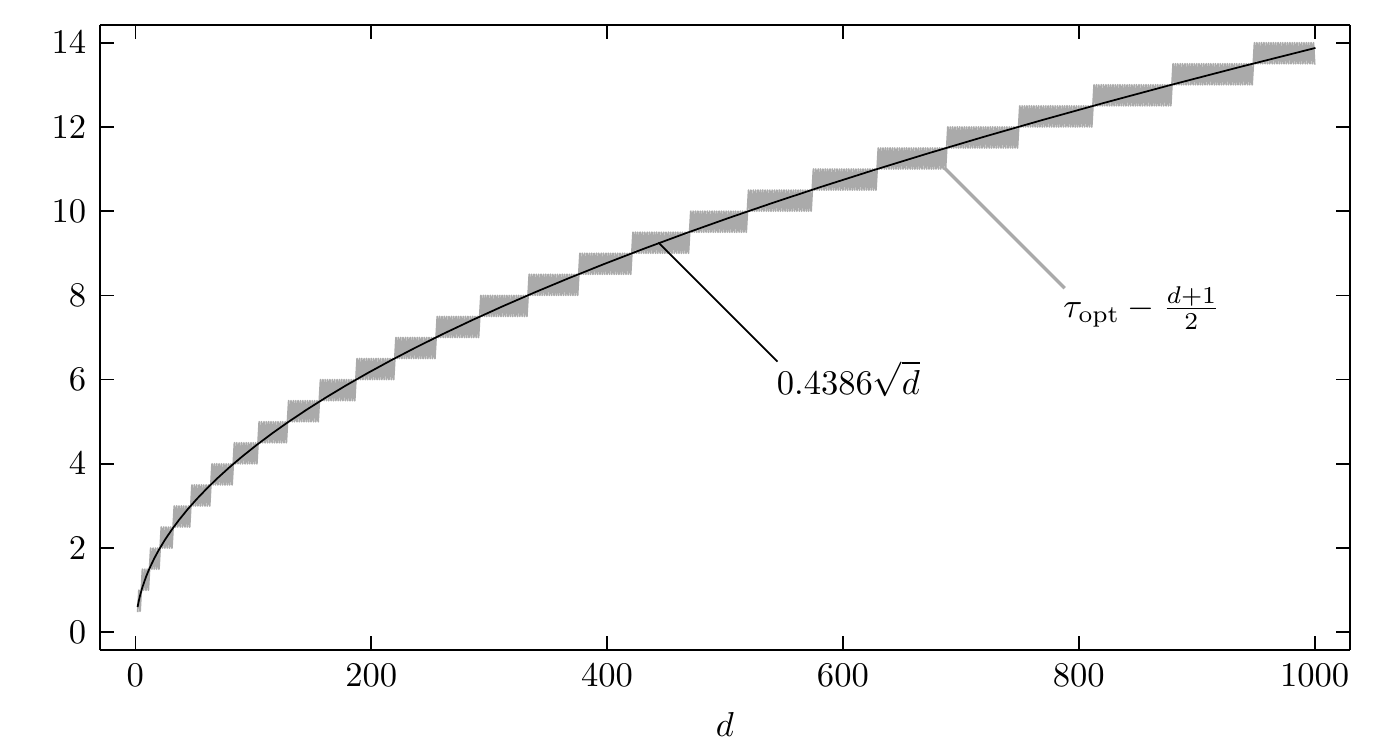}
    \caption{Optimal threshold $\tau_{\opt}$ for $d = 2, 3, \dotsc, 1000$.}\label{fig:tau}
\end{figure}

\pagebreak

What remains is a proof of the performance guarantee \eqref{eq:perf}. Figure~\ref{fig:small} gives some intuition on how good the bounds are.

\begin{theorem}\label{thm:main}
    Let $d \ge 2$ and
    \[
        \tau = \biggl\lceil \frac{d + \sqrt{d}}{2} \biggr\rceil.
    \]
    Then
    \[
        \alpha(\tau, d) \ge \frac{1}{2} + \frac{9}{32 \sqrt{d}}.
    \]
\end{theorem}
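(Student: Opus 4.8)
The plan is to feed the exact formula of Lemma~\ref{lem:wtau} into a probabilistic reading and then estimate carefully. Write $n=d-1$, $t=\tau-1$, and let $X\sim\mathrm{Bin}(n,\tfrac12)$. Recognising the two binomial sums of Lemma~\ref{lem:wtau} as probabilities gives
\[
    \alpha(\tau,d)-\tfrac12 \;=\; P\cdot Q,\qquad
    P=\Pr[X=t]=\frac{\binom{n}{t}}{2^{n}},\qquad
    Q=\Pr[d-\tau+1\le X\le \tau-1]=\frac{1}{2^{n}}\sum_{i=d-\tau+1}^{\tau-1}\binom{n}{i}.
\]
The point of the choice of $\tau$ is that $t-\tfrac n2=\tau-(d+1)/2$ lies between $\tfrac{\sqrt d-1}{2}$ and $\tfrac{\sqrt d+1}{2}$, i.e.\ within a factor $1+O(1/\sqrt d)$ of one standard deviation $\sigma=\sqrt n/2$. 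Hence we expect $P$ to be of order $1/\sqrt d$ and $Q$ to be a constant bounded away from $0$; the local central limit theorem suggests $P\approx\sqrt{2/(\pi n)}\,e^{-1/2}$ and $Q\approx 2\Phi(1)-1$, whose product is $\approx 0.33/\sqrt d$, comfortably above $9/(32\sqrt d)=0.28125/\sqrt d$. The whole proof is about turning these heuristics into rigorous inequalities while keeping enough of the $\approx 17\%$ slack.

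First I would dispose of small degrees. For every $d$ up to some explicit threshold $d_0$ the inequality is a finite computation: substitute $\tau=\lceil(d+\sqrt d)/2\rceil$ into Lemma~\ref{lem:wtau} and check $\alpha(\tau,d)\ge\tfrac12+9/(32\sqrt d)$ directly; this is exactly the evaluation already used to produce Table~\ref{tab:smalld}, and it is cheap well beyond any $d_0$ we will need. So assume $d\ge d_0$.

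For $d\ge d_0$ I would bound $P$ and $Q$ separately with elementary but careful binomial estimates. For $P$: lower-bound the central coefficient by a Stirling-type bound $\binom{n}{\lfloor n/2\rfloor}/2^{n}\ge\sqrt{2/(\pi n)}\,(1-O(1/n))$, and then lower-bound the ratio $\binom{n}{t}/\binom{n}{\lfloor n/2\rfloor}=\prod_{j=\lfloor n/2\rfloor}^{t-1}\frac{n-j}{j+1}$ by writing each factor as $1-x_j$ with $x_j=\frac{2j+1-n}{j+1}\in[0,1)$ and invoking $1-x\ge e^{-x/(1-x)}$; the exponents sum to $\sum_j\frac{2j+1-n}{n-j}\le\frac{(t-n/2)^2}{\,n-t\,}=\tfrac12+O(1/\sqrt d)$, so the ratio is at least $e^{-1/2}(1-O(1/\sqrt d))$ and $P\ge\sqrt{2/(\pi n)}\,e^{-1/2}(1-O(1/\sqrt d))$. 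For $Q$: the same telescoping, now two-sided, gives $e^{-2k^2/(n+2k)}\ge\binom{n}{\lfloor n/2\rfloor+k}/\binom{n}{\lfloor n/2\rfloor}\ge e^{-2k^2/(n-2k+2)}$; summing the lower bound over $k$ in the window $\{d-\tau+1,\dots,\tau-1\}$ and comparing the resulting sum of Gaussian-type terms with the integral $\int e^{-2x^2/n}\,dx$ (using $f(k)\ge\int_k^{k+1}f$ for a decreasing $f$) yields $Q\ge\bigl(2\Phi(1)-1\bigr)(1-O(1/\sqrt d))$. Multiplying, $PQ\ge\frac{c}{\sqrt n}(1-O(1/\sqrt d))$ with $c>9/32$; since $1/\sqrt n\ge1/\sqrt d$, this exceeds $9/(32\sqrt d)$ once $d\ge d_0$, which is what pins down how large $d_0$ must be taken.

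The main obstacle is the lower bound on $Q$. Because $t-\tfrac n2$ is only about one standard deviation, $\Pr[X\ge\tau]$ is genuinely close to $\Phi(-1)\approx0.16$ and the usable slack in $Q$ is small; off-the-shelf tail bounds are far too weak here — Hoeffding/Chernoff only give $\Pr[X\ge\tau]\le e^{-1/2+o(1)}\approx0.6$, and even bounding the tail by a geometric series (legitimate by log-concavity of $i\mapsto\binom{n}{i}$) overshoots the truth by a factor close to $1.5$, which asymptotically leaves only $\delta\gtrsim0.25/\sqrt d$ and so does not suffice. This is why the estimate for $Q$ must be of local-limit-theorem strength, obtained through the explicit term-by-term comparison of binomial ratios with Gaussian densities described above; everything else in the argument is routine once that estimate and the value of $d_0$ are in place.
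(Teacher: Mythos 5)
Your proposal and the paper's proof share the same overall architecture: first verify small $d$ by a finite computation, then for large $d$ recognise $\alpha(\tau,d)-\tfrac12$ as the product of a local binomial probability $P=\Pr[X=\tau-1]$ and a central window probability $Q$, bound the central binomial coefficient by a Stirling estimate, pass from central to off-centre coefficients by telescoping the ratio $\binom{n}{m+k}/\binom{n}{m}$, and obtain a Gaussian-strength lower bound on $Q$. That identification of $P\cdot Q$ is exactly what the paper does too (the $1/4^{d-1}$ splits into $(1/2^{d-1})\cdot(1/2^{d-1})$). Where you diverge is in how the sum $Q$ is lower-bounded: you propose comparing $\sum_k\binom{n}{m+k}/\binom{n}{m}$ to $\int e^{-2x^2/n}\,dx$ term by term, whereas the paper builds a four-step piecewise-constant minorant evaluated at $\delta_j(n)=\lfloor j\sqrt{n/32}\rfloor$ and sums four rectangles. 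Both are ``local limit theorem by hand,'' and both leave roughly the same slack; your integral comparison is tighter and more natural asymptotically, while the paper's discrete step function makes the explicit numerical verification (yielding the concrete threshold $d>3000$, paired with a machine check below it) completely mechanical. You also correctly identify that Chernoff-type tail bounds are hopeless here because $\tau$ sits only one standard deviation off centre, which is precisely the reason the paper builds the step-function minorant rather than invoking a generic concentration inequality. Two details you gloss over that the paper handles explicitly: (i) the parity of $d$ matters, since $\delta=\tau-n$ lands at slightly different offsets for $d=2n+1$ versus $d=2n$, and the paper treats the two cases separately; and (ii) you never pin down $d_0$ — to make the argument complete you must propagate explicit constants through the $(1-O(1/\sqrt d))$ factors and check they leave enough of the $\approx 17\%$ margin, which is the bulk of the actual work and is where the paper's concrete numerical bookkeeping (Fact~\ref{fact:center}, Lemmas~\ref{lem:offcenter} and~\ref{lem:cumsum}) earns its keep.
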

\begin{proof}
    See Appendix~\ref{app:main}.
\end{proof}

\begin{figure}[p]
    \centering
    \includegraphics{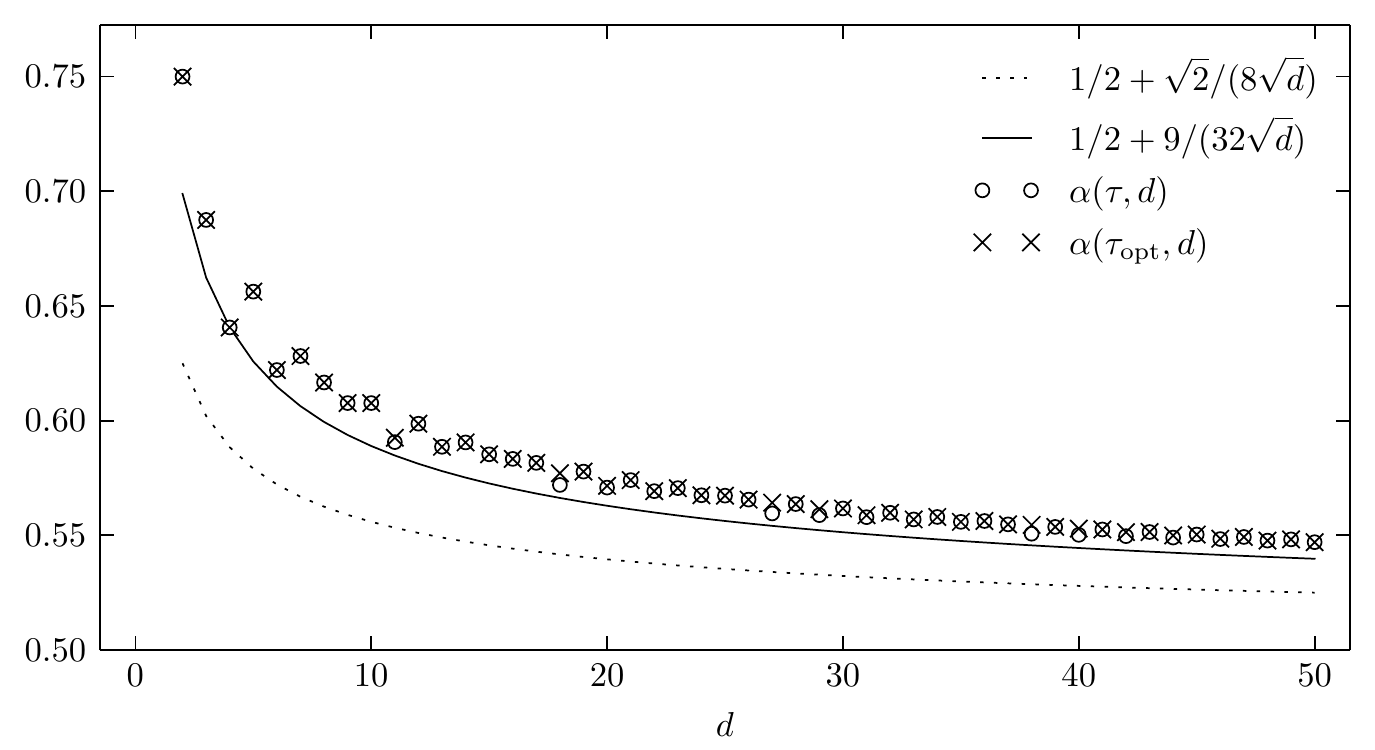}

    \medskip
    \small
    \begin{tabular}{@{}r@{ }l@{}}
    $\alpha(\tau_{\opt}, d)$: & expected weight of a cut found by an optimal threshold algorithm \\[1ex]
    $\alpha(\tau, d)$: & expected weight of a cut found by our algorithm \\[1ex]
    $\frac{1}{2} + \frac{9}{32 \sqrt{d}}$: & a lower bound on $\alpha(\tau, d)$ \\[1ex]
    $\frac{1}{2} + \frac{\sqrt{2}}{8 \sqrt{d}}$: & a lower bound for Shearer's~\cite{shearer92bipartite} algorithm \\
    \end{tabular}
    \bigskip
    \caption{A comparison of performance guarantees for $d = 2,3,\dots,50$. Note that our lower bound and the perfomance of the best threshold algorithm meet at $d = 4$. If we wanted to prove a general bound of the form $1/2 + C/\sqrt{d}$ for a larger $C > 9/32$, we would have to resort to more complicated algorithms (e.g., more than $1$ bit of randomness per node or more than $1$ communication round).}\label{fig:small}
\end{figure}

\section{Conclusions}\label{sec:concl}

In this work, we have presented a new randomised distributed algorithm for finding large cuts. The key observation was that the task of designing randomised distributed algorithms for finding large cuts can be reduced to the problem of finding a max-weight cut in a weighted neighbourhood graph. This way we were able to use computers to find optimal algorithms for small values of~$d$. The general form of the optimal algorithms was apparent, and hence the results were easy to generalise.

Our algorithm was designed for $d$-regular triangle-free graphs. However, it can be easily applied in a much more general setting as well. To see this, recall that $\alpha(\tau, d)$ is not only the expected weight of the cut, but it is also the probability that any individual edge $e = \{u,v\}$ is a cut edge. The analysis only assumes that $u$ and $v$ are of degree $d$ and they do not have a common neighbour. Hence we have the following immediate generalisations.
\begin{enumerate}
    \item Our algorithm can be applied in triangle-free graphs of \emph{maximum} degree $d$ as follows: a node of degree $d' < d$ simulates the behaviour of $d - d'$ missing neighbours. We still have the same guarantee that each original edge is a cut edge with probability $\alpha(\tau, d)$. The running time of the algorithm is still one communication round; however, some nodes need to produce more random bits.
    \item Our algorithm can also be applied in \emph{any} graph, even in those that contain triangles. Now our analysis shows that each edge that is not part of a triangle will be a cut edge with probability $\alpha(\tau,d)$. This observation already gives a simple bound: if at most a fraction $\epsilon$ of all edges are part of a triangle, we will find a cut of expected size at least ${(1-\epsilon)}\cdot\alpha(\tau,d)$.
\end{enumerate}

\section*{Acknowledgements}

Computer resources were provided by the Aalto University School of Science ``Science-IT'' project (Triton cluster), and by the Department of Computer Science at the University of Helsinki (Ukko cluster).

\clearpage

\def\UrlFont{\sf\footnotesize}
\bibliographystyle{plainnat}
\bibliography{local-maxcut}

\newpage
\appendix

\section{Proof of Theorem~\ref{thm:main}}\label{app:main}

We need to prove a lower bound on
\[
    \alpha(\tau, d)
    = \frac{1}{2} + \frac{1}{4^{d-1}}
      \binom{d-1}{\tau-1}
      \sum_{i=d-\tau+1}^{\tau-1} \binom{d-1}{i}
\]
in the region $\tau \approx d/2 + \sqrt{d}/2$. Our general strategy is as follows:
\begin{enumerate}[noitemsep]
    \item Verify cases $d = 2,3,\dotsc,3000$ with a computer.
    \item Prove a closed-form lower bound for $d > 3000$.
\end{enumerate}
The first part is easily solved with a simple Python script or with a short calculation in Mathematica (see Figure~\ref{fig:small} for examples of the results for $d = 2,3,\dotsc,50$). We will now focus on the second part; for that we will need various estimates of binomial coefficients.

The proof given here is certainly not the most elegant way to derive the bound, but it is self-contained and gets the job done. Proving the claim for a ``sufficiently large'' $d$ would be straightforward. However, we need to show that already a concrete relatively small $d$ such as $d > 3000$ is enough.

We will first approximate binomial coefficients with the normal distribution. Let $J = \{1,2,3,4\}$, and define
\[
    \delta_j(n) = \Bigl\lfloor j \sqrt{n/32} \Bigr\rfloor, \quad
    g_j = e^{-j^2/32}
\]
for each $j \in \{0\} \cup J$.

\begin{fact}\label{fact:center}
    For any $n \ge 1500$ we have
    \[
        \frac{0.999}{\sqrt{\pi n}} < \frac{1}{4^n} \binom{2n}{n} < \frac{1}{\sqrt{\pi n}}.
    \]
\end{fact}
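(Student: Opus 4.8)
Write $a_n = \dfrac{\sqrt{\pi n}}{4^n}\binom{2n}{n}$; the claim is $0.999 < a_n < 1$ for $n \ge 1500$. The plan is to establish two facts: (i) the sequence $(a_n)$ is strictly increasing with $\lim_{n\to\infty} a_n = 1$, which immediately yields the upper bound $a_n < 1$ for every $n$; and (ii) $a_{1500} > 0.999$, which by (i) gives $a_n \ge a_{1500} > 0.999$ for all $n \ge 1500$. Thus the only genuinely numerical check is a single one at $n = 1500$.

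\textbf{Monotonicity and limit.} For (i), compute the ratio of consecutive terms using $\binom{2n+2}{n+1}\big/\binom{2n}{n} = \frac{(2n+1)(2n+2)}{(n+1)^2}$:
\[
    \frac{a_{n+1}}{a_n} = \frac{2n+1}{2n+2}\sqrt{\frac{n+1}{n}},
    \qquad
    \left(\frac{a_{n+1}}{a_n}\right)^{\!2} = \frac{(2n+1)^2}{4n(n+1)} = \frac{4n^2+4n+1}{4n^2+4n} > 1,
\]
so $(a_n)$ is strictly increasing. The limit $a_n \to 1$ is the standard Stirling/Wallis asymptotics $\binom{2n}{n} \sim 4^n/\sqrt{\pi n}$, which may simply be quoted. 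Together these give $a_n < 1$ for all $n$, hence the upper bound $\frac{1}{4^n}\binom{2n}{n} < \frac{1}{\sqrt{\pi n}}$.

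\textbf{Lower bound at $n = 1500$.} For (ii), invoke Robbins' explicit Stirling bounds $\sqrt{2\pi m}\,(m/e)^m e^{1/(12m+1)} < m! < \sqrt{2\pi m}\,(m/e)^m e^{1/(12m)}$. Substituting into $\binom{2n}{n} = (2n)!/(n!)^2$ — the lower Robbins bound in the numerator and the upper one in the denominator — the powers of $n$ and $e$ cancel cleanly and leave
\[
    a_n = \frac{\sqrt{\pi n}}{4^n}\binom{2n}{n} > \exp\!\left(\frac{1}{24n+1} - \frac{1}{6n}\right).
\]
At $n = 1500$ the exponent equals $\frac{1}{36001} - \frac{1}{9000} > -8.34\times 10^{-5}$, so $a_{1500} > e^{-8.34\times 10^{-5}} > 0.9999 > 0.999$. (Alternatively, a direct numerical evaluation of $a_{1500}$, in the same spirit as the paper's computer verification for $d \le 3000$, gives $a_{1500} \approx 0.99983$.) Combining with monotonicity finishes the proof.

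\textbf{Main obstacle.} There is no conceptual difficulty; the only care needed is bookkeeping — choosing the correct direction of the Robbins error term for each factorial so that the resulting inequality points the right way, and checking that the single numerical estimate at $n = 1500$ clears the constant $0.999$ with room to spare.
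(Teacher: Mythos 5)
Your proof is correct and self-contained. Note that the paper offers no proof of this statement at all: it is labelled a \emph{Fact}, signalling that the authors regard it as a routine consequence of Stirling/Wallis asymptotics and delegate verification to the reader (or to the same computer check they use elsewhere). Since there is no argument in the paper to compare against, the relevant question is only whether your argument is sound, and it is: the monotonicity computation $\bigl(a_{n+1}/a_n\bigr)^2 = (4n^2+4n+1)/(4n^2+4n) > 1$ together with the known limit $a_n \to 1$ gives the upper bound $a_n < 1$ for every $n$; and the Robbins-form Stirling bounds, with the error terms oriented as you describe, give $a_n > \exp\bigl(\tfrac{1}{24n+1} - \tfrac{1}{6n}\bigr)$, which combined with monotonicity reduces the lower bound to the single check $a_{1500} > 0.999$, satisfied with ample margin. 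One small remark: the monotonicity argument already shows $a_n > e^{1/(24n+1) - 1/(6n)}$ is not even needed for all $n \ge 1500$ separately; the single evaluation at $n = 1500$ suffices, which is exactly the economy you point out. This is a clean, standard, and appropriately explicit route to the stated constants.
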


\begin{lemma}\label{lem:offcenter}
    For any $j \in J$, $\delta = \delta_j(n)$, and $n \ge 1500$ we have
    \[
        \binom{2n}{n + \delta} > 0.995 \cdot g_j \cdot \binom{2n}{n}
    \]
\end{lemma}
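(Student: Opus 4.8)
The plan is to estimate the ratio $R := \binom{2n}{n+\delta}\big/\binom{2n}{n}$ directly, rather than the two binomial coefficients separately, since the common factor $(2n)!$ cancels and what remains telescopes. Writing $\delta = \delta_j(n)$, we have
\[
    R \;=\; \frac{n!\,n!}{(n+\delta)!\,(n-\delta)!} \;=\; \prod_{k=1}^{\delta}\frac{n+1-k}{n+k}.
\]
The first move I would make is to symmetrise each factor about $n+\tfrac12$: setting $N = n+\tfrac12$ and $m_k = k-\tfrac12$, the $k$-th factor is $\tfrac{N-m_k}{N+m_k}$, so $\log R = -\sum_{k=1}^{\delta}\bigl(\log(N+m_k)-\log(N-m_k)\bigr)$. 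This symmetric form is exactly what makes the leading term come out right. (Note also $\delta = \delta_j(n) \ge \delta_1(1500) = 6 \ge 1$, so the product is nonempty, and $m_\delta < \delta \ll n < N$.)

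Next I would bound each summand via the integral representation $\log(N+m)-\log(N-m) = \int_0^m \frac{2N\,dt}{N^2-t^2} \le \frac{2Nm}{N^2-m^2}$, valid for $0\le m<N$. Since $\sum_{k=1}^{\delta} m_k = \sum_{k=1}^\delta (k-\tfrac12) = \tfrac{\delta^2}{2}$ and $N^2-m_k^2\ge N^2-\delta^2$, this gives
\[
    \log R \;\ge\; -\frac{N\delta^2}{N^2-\delta^2} \;=\; -\frac{\delta^2}{N}\cdot\frac{1}{1-\delta^2/N^2}.
\]
Now I would feed in the two facts we are handed: $\delta \le j\sqrt{n/32}$ (so $\delta^2 \le j^2 n/32$, whence $\delta^2/N < j^2/32$) and $n\ge 1500$ with $j\le 4$ (so $\delta^2/N^2 < j^2/(32n) \le 1/3000$). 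This yields $\log R \ge -\tfrac{j^2}{32}\cdot\tfrac{3000}{2999} = -\tfrac{j^2}{32} - \varepsilon$ with $\varepsilon = \tfrac{j^2}{32\cdot 2999} < 2\cdot 10^{-4}$, hence $R \ge g_j\, e^{-\varepsilon} > 0.995\,g_j$, since $e^{-\varepsilon}$ sits well above $0.995$. The numerical bounds $\frac{\delta^2}{N}<\frac{j^2}{32}$, $\frac{\delta^2}{N^2}<\frac{1}{3000}$, and $e^{-2\cdot10^{-4}}>0.995$ only need checking for the four values $j\in\{1,2,3,4\}$ at the extreme $n=1500$.

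The one delicate point — the \emph{main obstacle} — is that the error term must be small enough to afford the explicit constant $0.995$, so the crudest estimates will not suffice. A naive expansion $\log\!\bigl(1-\tfrac{2k-1}{n+k}\bigr) \ge -\tfrac{2k-1}{n+k}-\bigl(\tfrac{2k-1}{n+k}\bigr)^2$ leaves a quadratic correction of order $\delta^3/n^2$, which at $n=1500$, $j=4$ is already about $10^{-2}$ — too large. Symmetrising about $n+\tfrac12$ (equivalently, keeping a genuine second-order bound) shrinks the correction to order $\delta^4/N^3 \sim j^4/(1000\,n) \sim 10^{-4}$, comfortably within budget. Everything after that is routine.
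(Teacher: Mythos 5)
Your proof is correct and takes a genuinely different route from the paper's. The paper bounds each factor of the product by $1 - \delta/n$, arriving at $R > (1-\delta/n)^\delta \ge h_j(\delta)$ with $h_j(\delta) = (1 - j^2/(32\delta))^\delta$, and then closes the argument by a finite numerical check (implicitly using that $h_j$ is increasing in $\delta$) that $h_j(\delta_j(1500)) > 0.995\,g_j$ for each $j \in J$; this margin is tight (at $j=4$, $\delta=27$ one gets $h_4(27)/g_4 \approx 0.9954$). Your symmetrisation about $N = n + \tfrac12$ together with the integral bound on $\log(N+m) - \log(N-m)$ replaces that per-$j$ check with a clean, purely analytic estimate $\log R > -\tfrac{j^2}{32}\bigl(1 - \tfrac{\delta^2}{N^2}\bigr)^{-1}$, and the resulting margin $e^{-\varepsilon} > 0.9998$ has far more slack. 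Your diagnostic comment is also on the mark: the paper really does use the "crude" $(1-\delta/n)^\delta$ bound, whose error is $\Theta(\delta^3/n^2) \sim 10^{-2}$ at $n=1500$, and it escapes the analytic obstacle you identify precisely by verifying $h_j$ numerically rather than expanding it. What the paper's route buys is brevity; what yours buys is a self-contained proof with no per-case numerical verification and a constant visibly far from the boundary.
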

\begin{proof}
    We can estimate
    \[
    \begin{split}
        \binom{2n}{n + \delta} / \binom{2n}{n}
        &= \frac{n!}{(n + \delta)!} \cdot \frac{n!}{(n - \delta)!} \\[1ex]
        &= \frac{n-\delta+1}{n+1} \cdot \frac{n-\delta+2}{n+2} \dotsm \frac{n}{n+\delta}
        > \left(1 - \frac{\delta}{n}\right)^\delta \ge h_j(\delta),
    \end{split}
    \]
    where
    \[
        h_j(\delta) = \left(1 - \frac{j^2}{32 \delta}\right)^\delta.
    \]
    Now $h_j(\delta) \to g_j$ as $\delta \to \infty$. For each $j \in J$ we can verify that $h_j(\delta) > 0.995 \cdot g_j$ when $\delta \ge \delta_j(1500)$.
\end{proof}

\begin{lemma}\label{lem:cumsum}
    For $\delta = \delta_4(n)$ and $n \ge 1500$ we have
    \begin{align*}
        \frac{1}{4^n} \sum_{i = -\delta+1}^{\delta} \binom{2n}{n + i} &> 0.6088, &
        \frac{1}{4^n} \sum_{i = -\delta+1}^{\delta-1} \binom{2n}{n + i} &> 0.5975.
    \end{align*}
\end{lemma}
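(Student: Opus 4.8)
Write $f(i) = \binom{2n}{n+i}$, so that $f(i) = f(-i)$ and $f$ is strictly decreasing on $\{0,1,\dots,n\}$; since $\delta_4(n) = \lfloor \sqrt{n/2}\rfloor < n$, every index that occurs below lies in this range. The plan is to first bound the one-sided partial sum $\sum_{i=1}^{\delta_4-1} f(i)$ from below by a Riemann-sum-style estimate assembled from Lemma~\ref{lem:offcenter}, and then translate this into lower bounds on the two symmetric sums in the statement using $f(i)=f(-i)$.

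For the one-sided sum, partition $\{1,\dots,\delta_4-1\}$ into blocks $B_1 = \{1,\dots,\delta_1\}$, $B_2=\{\delta_1+1,\dots,\delta_2\}$, $B_3=\{\delta_2+1,\dots,\delta_3\}$, $B_4=\{\delta_3+1,\dots,\delta_4-1\}$, where $\delta_j=\delta_j(n)$ and I set $\delta_0=0$. These are nonempty and the boundaries are strictly increasing because $\delta_j-\delta_{j-1}\ge \lfloor\sqrt{n/32}\rfloor\ge 6$ for $n\ge 1500$. Inside $B_j$ every index is at most $\delta_j$, so unimodality gives $f(i)\ge f(\delta_j)$, and Lemma~\ref{lem:offcenter} gives $f(\delta_j) > 0.995\,g_j f(0)$. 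Summing over the four blocks yields
\[
    \sum_{i=1}^{\delta_4-1} f(i) \;\ge\; 0.995\,f(0)\Bigl(\textstyle\sum_{j=1}^{4}(\delta_j-\delta_{j-1})\,g_j - g_4\Bigr).
\]
An Abel summation rewrites $\sum_{j=1}^4(\delta_j-\delta_{j-1})g_j = \sum_{j=1}^{3}\delta_j(g_j-g_{j+1}) + \delta_4 g_4$; since $g_j$ is strictly decreasing and $\delta_j > j\sqrt{n/32}-1$, this is at least $\sqrt{n/32}\sum_{j=1}^4 g_j - g_1$. Hence $\sum_{i=1}^{\delta_4-1} f(i) > 0.995\,f(0)\bigl(\sqrt{n/32}\sum_j g_j - g_1 - g_4\bigr)$.

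Now I convert back. By symmetry, with $\delta=\delta_4$, one has $\sum_{i=-\delta+1}^{\delta} f(i) = f(0) + f(\delta) + 2\sum_{i=1}^{\delta-1} f(i)$ and $\sum_{i=-\delta+1}^{\delta-1} f(i) = f(0) + 2\sum_{i=1}^{\delta-1} f(i)$, and Lemma~\ref{lem:offcenter} with $j=4$ also gives $f(\delta) > 0.995\,g_4 f(0)$. Substituting the one-sided bound, each sum equals $f(0)$ times an explicit affine expression in $\sqrt{n/32}$, $g_1$, and $g_4$ (with a positive coefficient on $\sqrt{n/32}\sum_j g_j$). Dividing by $4^n$, I apply Fact~\ref{fact:center} in the form $\binom{2n}{n}/4^n > 0.999/\sqrt{\pi n}$ to the (positive) bracket, observing that $\sqrt{n/32}/\sqrt{\pi n} = 1/\sqrt{32\pi}$, so the $\sqrt n$ factors cancel and the dominant term becomes the $n$-independent constant $\tfrac{0.999\cdot 1.99}{\sqrt{32\pi}}\sum_j g_j \approx 0.637$; the remaining $O(1/\sqrt n)$ terms carry negative coefficients, and I bound them below by their values at $n=1500$ using $1/\sqrt{\pi n}\le 1/\sqrt{1500\pi}$. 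With $g_j = e^{-j^2/32}$ and $\sum_{j=1}^4 g_j \approx 3.213$ this gives roughly $0.615$ for the first sum and $0.606$ for the second, beating $0.6088$ and $0.5975$.

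The delicate point is that the slack is only a few thousandths, so almost nothing can be loosened. In particular, for the first sum it is essential to keep the $f(0)$ term explicitly and to use the genuine lower bound $f(\delta_4) > 0.995\,g_4 f(0)$ rather than merely $f(\delta_4)\ge 0$, or else the $O(1/\sqrt n)$ residuals consume the margin; one must also orient each inequality from Fact~\ref{fact:center} correctly. The routine checks — that the block boundaries satisfy $\delta_1<\delta_2<\delta_3<\delta_4-1$, that $f$ is decreasing on $\{0,\dots,\delta_4\}$ (which holds since $\delta_4<n$), and the final arithmetic with $g_j=e^{-j^2/32}$ — I expect to be straightforward; the main work is the Abel-summation step and confirming the constants are, in fact, just tight enough.
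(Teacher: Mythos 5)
The key step, the block estimate
\[
    \sum_{i=1}^{\delta_4-1} f(i) \;\ge\; 0.995\,f(0)\Bigl(\textstyle\sum_{j=1}^{4}(\delta_j-\delta_{j-1})\,g_j - g_4\Bigr),
\]
is not actually justified, and when corrected the second inequality of the lemma fails at $n=1500$. Since your block $B_4$ stops at $\delta_4-1$, the correct consequence of the unimodality estimate is
\[
    \sum_{i=1}^{\delta_4-1} f(i) \;\ge\; \sum_{j=1}^{4}(\delta_j-\delta_{j-1})\,f(\delta_j) \;-\; f(\delta_4),
\]
and to pass to your displayed bound you would need the \emph{upper} bound $f(\delta_4) \le 0.995\,g_4\,f(0)$; Lemma~\ref{lem:offcenter} only gives the \emph{lower} bound $f(\delta_4) > 0.995\,g_4\,f(0)$ (which, if anything, goes the wrong way). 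For $n=1500$ one can check $f(\delta_4)/f(0) \approx 0.615 > 0.995\,g_4 \approx 0.604$, so the claimed inequality is genuinely false, not merely unproven. The numerical cushion you report (``$\approx 0.615$'' and ``$\approx 0.606$'') is an artefact of this sign slip.

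For the first symmetric sum the slip is harmless in the end: substituting the corrected block bound into $\sum_{-\delta+1}^{\delta} f(i) = f(0)+f(\delta)+2\sum_{1}^{\delta-1} f(i)$, the spurious $-2f(\delta_4)$ from the block estimate is absorbed by $f(0)+f(\delta)-2f(\delta) = f(0)-f(\delta) > 0$, and one lands back on the paper's bound $2\cdot 0.995\,f(0)\sum_j(\delta_j-\delta_{j-1})g_j \ge 0.6088\cdot 4^n$, with essentially no slack. But for the second sum, $\sum_{-\delta+1}^{\delta-1} f(i) = f(0)+2\sum_1^{\delta-1} f(i) \ge f(0)-2f(\delta_4)+2\sum_j(\delta_j-\delta_{j-1})f(\delta_j)$, and now only one $f(\delta_4)$ cancels; using the trivial upper bound $f(\delta_4)\le f(0)$ leaves a deficit of about $f(0)/4^n \approx 1/\sqrt{\pi n}$, and plugging in $n=1500$ gives roughly $0.594 < 0.5975$. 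The paper sidesteps this by working with the untruncated one-sided sum $\sum_{i=1}^{\delta}$ (so no $-f(\delta_4)$ ever appears) and then using the sharper combinatorial estimate $\sum_{-\delta+1}^{\delta-1} f(i) > (2-1/\delta)\sum_{i=1}^{\delta} f(i)$, which follows from $\sum_{i=1}^{\delta} f(i) \ge \delta f(\delta)$ and $f(0)>f(\delta)$; the resulting $1/\delta$ correction is about $1/27$ at $n=1500$, much smaller than the $O(1/\sqrt{\pi n})$ term your route incurs. Either adopt the paper's device, or supply a genuine upper bound on $f(\delta_4)$ (e.g.\ $f(\delta)/f(0) < (1-\delta/(n+\delta))^\delta$) tight enough that $f(0)-2f(\delta_4)$ costs less than $\approx 0.01\cdot 4^n/\sqrt{\pi n}$ at $n=1500$.
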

\begin{proof}
    Here we could apply the Berry--Esseen theorem, but the following simple piecewise estimate is sufficient for our purposes. As
    \[
        \delta_j(n) > j \sqrt{n/32} - 1,
    \]
    we have
    \[
        \sum_{j = 1}^{4} \bigl(\delta_j(n) - \delta_{j-1}(n)\bigr) \cdot g_j
         > \biggl( \sum_{j = 1}^{4} g_j \sqrt{n/32} \biggr) - g_1
         > 0.5680 \sqrt{n} - 0.9693.
    \]
    Hence using Fact~\ref{fact:center} and Lemma~\ref{lem:offcenter} we have
    \[
    \begin{split}
        \frac{1}{4^n} \sum_{i = 1}^{\delta} \binom{2n}{n + i}
        &\ge \frac{1}{4^n} \sum_{j = 1}^{4} \bigl(\delta_j(n) - \delta_{j-1}(n)\bigr) \binom{2n}{n + \delta_j(n)} \\
        &\ge 0.995 \cdot \frac{1}{4^n} \binom{2n}{n} \sum_{j = 1}^{4} \bigl(\delta_j(n) - \delta_{j-1}(n)\bigr) g_j \\
        &> 0.995 \cdot \frac{0.999}{\sqrt{\pi n}} \cdot \bigl( 0.5680 \sqrt{n} - 0.9693 \bigr) \\
        &> 0.3185 - 0.5436 / \sqrt{n}
         > 0.3044.
    \end{split}
    \]
    The claim follows from the observations
    \begin{align*}
        \frac{1}{4^n} \sum_{i = -\delta + 1}^{\delta} \binom{2n}{n + i}
        &> \frac{2}{4^n} \sum_{i = 1}^{\delta} \binom{2n}{n + i}
         > 2 \cdot 0.3044 = 0.6088, \\
        \frac{1}{4^n} \sum_{i = -\delta + 1}^{\delta-1} \binom{2n}{n + i}
        &> \Bigl(2 - \frac{1}{\delta}\Bigr) \frac{1}{4^n} \sum_{i = 1}^{\delta} \binom{2n}{n + i}
         > 1.9629 \cdot 0.3044 > 0.5975.
        \qedhere
    \end{align*}
\end{proof}

Now we have the estimates that we will use in the proof of Theorem~\ref{thm:main}. We will consider the odd and even values of $d$ separately.

\paragraph{Odd {\boldmath $d$}.}

Assume that $d = 2n+1$, $n \ge 1500$. Let
\[
    \delta = \tau - n = \Big\lceil \sqrt{n/2+1/4} + 1/2 \Big\rceil, \quad \delta' = \delta_4(n),
\]
and observe that
\[
    \sqrt{n/2} < \sqrt{n/2+1/4} + 1/2 < \sqrt{n/2} + 1.
\]
It follows that
\[
    \delta' + 1 \le \delta \le \delta' + 2.
\]
Therefore
\[
\begin{split}
    \alpha(\tau, d)
    &= \frac{1}{2} + \frac{1}{4^{d-1}}
       \binom{d-1}{\tau-1}
       \sum_{i=d-\tau+1}^{\tau-1} \binom{d-1}{i} \\
    &= \frac{1}{2} + \frac{1}{4^{2n}}
       \binom{2n}{n+\delta-1}
       \sum_{i=-\delta+2}^{\delta-1} \binom{2n}{n+i} \\
    &\ge \frac{1}{2} + \frac{1}{4^{2n}}
       \binom{2n}{n+\delta'+1}
       \sum_{i=-\delta'+1}^{\delta'} \binom{2n}{n+i} \\
    &= \frac{1}{2} + 
       \frac{n-\delta'}{n+\delta'+1} \cdot
       \frac{1}{4^n} \binom{2n}{n+\delta'} \cdot
       \frac{1}{4^n} \sum_{i=-\delta'+1}^{\delta'} \binom{2n}{n+i} \\
    &> \frac{1}{2} + 
       0.964 \cdot
       0.995 \cdot g_4 \cdot \frac{0.999}{\sqrt{\pi n}} \cdot 0.6088
     > \frac{1}{2} + \frac{0.2823}{\sqrt{d-1}}
     > \frac{1}{2} + \frac{9}{32\sqrt{d}}.
\end{split}
\]

\paragraph{Even {\boldmath $d$}.}

Assume that $d = 2n$, $n > 1500$. Let
\[
    \delta = \tau - n = \Big\lceil \sqrt{n/2} \Big\rceil, \quad \delta' = \delta_4(n).
\]
Now we have
\[
    \delta' \le \delta \le \delta' + 1.
\]
For any $k < n$ we have the identity
\[
\begin{split}
       \sum_{i=-k}^{k} \binom{2n}{n+i}
    &= \sum_{i=-k}^{k} \biggl( \binom{2n-1}{n+i-1} + \binom{2n-1}{n+i} \biggr) \\
    &= \sum_{i=-k}^{k} \biggl( \binom{2n-1}{n-i} + \binom{2n-1}{n+i} \biggr)
     =2\sum_{i=-k}^{k} \binom{2n-1}{n+i}.
\end{split}
\]
We can use it to derive
\[
\begin{split}
    \alpha(\tau, d)
    &= \frac{1}{2} + \frac{1}{4^{d-1}}
       \binom{d-1}{\tau-1}
       \sum_{i=d-\tau+1}^{\tau-1} \binom{d-1}{i} \\
    &= \frac{1}{2} + \frac{1}{4^{2n-1}}
       \binom{2n-1}{n+\delta-1}
       \sum_{i=-\delta+1}^{\delta-1} \binom{2n-1}{n+i} \\
    &\ge \frac{1}{2} + \frac{1}{4^{2n-1}}
       \binom{2n-1}{n+\delta'}
       \sum_{i=-\delta'+1}^{\delta'-1} \binom{2n-1}{n+i} \\
    &= \frac{1}{2} + \frac{1}{4^{2n-1}}
       \cdot
       \frac{n-\delta'}{2n}
       \binom{2n}{n+\delta'}
       \cdot
       \frac{1}{2}
       \sum_{i=-\delta'+1}^{\delta'-1} \binom{2n}{n+i} \\
    &= \frac{1}{2} + 
       \frac{n-\delta'}{n}
       \cdot
       \frac{1}{4^n}
       \binom{2n}{n+\delta'}
       \cdot
       \frac{1}{4^n}
       \sum_{i=-\delta'+1}^{\delta'-1} \binom{2n}{n+i} \\
    &> \frac{1}{2} + 
       0.982 \cdot 0.995 \cdot g_4 \cdot \frac{0.999}{\sqrt{\pi n}} \cdot 0.5975
     > \frac{1}{2} + \frac{0.2822}{\sqrt{d}}
     > \frac{1}{2} + \frac{9}{32\sqrt{d}}.
\end{split}
\]
This completes the proof of Theorem~\ref{thm:main}.

\end{document}